\documentclass[lettersize,onecolumn]{IEEEtran}[12pt]

\usepackage{geometry}
\usepackage{amsmath,amsfonts}
\usepackage{algorithmic}
\usepackage{algorithm}
\usepackage{array}
\usepackage[caption=false,font=normalsize,labelfont=sf,textfont=sf]{subfig}
\usepackage{textcomp}
\usepackage{stfloats}
\usepackage{url}
\usepackage{verbatim}
\usepackage{graphicx}
\usepackage{cite}
\usepackage{tikz}
\usepackage{float}
\usepackage{setspace}
\usetikzlibrary{positioning}
\usetikzlibrary{calc} 
\usepackage{caption}
\usepackage{dsfont}

\newcommand{\set}[1]{\mathcal{#1}}

\RequirePackage{amsmath}
\RequirePackage{amsopn}
\RequirePackage{amssymb}
\RequirePackage{amsthm}
\newtheorem{theorem}{Theorem}
\begin{document}

\onehalfspacing

\title{The Equivalence of Causal and Noncausal State Information on
  Bipartite Networks With State-Cognizant Receivers}

\author{Amos Lapidoth, Baohua Ni, and Ligong Wang%
    \thanks{The authors are with the Department of Information Technology and Electrical Engineering, ETH Zurich, 8092 Zurich, Switzerland (email: \{lapidoth, baohni, ligwang\}@isi.ee.ethz.ch).}
}
        % <-this % stops a space

% The paper headers
%\markboth{Journal of \LaTeX\ Class Files,~Vol.~1, No.~2, December~2023}%
%{Shell \MakeLowercase{\textit{et al.}}: A Sample Article Using IEEEtran.cls for IEEE Journals}

%\IEEEpubid{0000--0000~\copyright~2023 IEEE}
% Remember, if you use this you must call \IEEEpubidadjcol in the second
% column for its text to clear the IEEEpubid mark.

\maketitle

\begin{abstract}
  State-dependent bipartite networks with state-cognizant receivers
  and state-informed transmitters are studied. Such networks have no
  nodes that both transmit and receive. Examples are the multi-access
  channel, the broadcast channel, and the interference
  channel. Without computing the capacity region of the network, it is
  shown that if the state sequence is ergodic and autonomous, and if,
  conditionally on the state sequence, the network law is memoryless,
  then the network capacity region does not depend on whether the
  state information is provided to the encoders causally or
  noncausally.
%
  % We examine the capacity of state-dependent bipartite networks where
  % receivers are cognizant of the state sequence. Traditionally,
  % proving the equivalence of causal and noncausal state information
  % has usually required the explicit computation of capacity
  % expressions, which are currently unknown for many complex
  % configurations. We present a proof technique that bypasses this
  % requirement to demonstrate that the causal and noncausal capacity
  % regions ($\set{C}_\text{c}$ and $\set{C}_\text{nc}$) are identical
  % for any network where nodes are strictly partitioned into disjoint
  % sets of transmitters and receivers. This result encompasses the
  % Multiple Access Channel, Broadcast Channel, and Interference
  % Channel. Furthermore, we show that this equivalence holds beyond
  % independent and identically distributed state sequences, requiring
  % only that the state sequence is autonomous and satisfies the weak
  % law of large numbers.
\end{abstract}
\section{Introduction}
The study of state-dependent channels began shortly after the
inception of Information Theory. Early work by Shannon
\cite{Shannon1958} established the single-user capacity for the causal
case, where the state sequence is independent and identically
distributed (IID), and where the transmitter is cognizant of the
past-and-present states. He showed that the capacity is achieved by
what we now call ``Shannon strategies.''  The case where the
transmitter is informed of the states noncausally was solved by
Gel'fand and Pinsker \cite{gelfand1980}. More involved results and a
detailed historical progression of channel coding with states can be
found in the survey paper \cite{keshet2008survey}. See also
\cite{YosseyBC} for some results on a class of state-dependent
broadcast channels in which the channel law from the strong receiver to
the weak receiver does not depend on the state. (For this class,
Steinberg fully characterized the capacity region in two cases: when
the non-causal state information is provided to both the encoder and
the nondegraded decoder, and when causal state information is
available only to the encoder.)

  For the single-user channel with a state-cognizant receiver, causal
  and noncausal state information at the transmitter are equally
  beneficial (in terms of capacity); see, e.g., \cite[Theorem
  1]{Jafar2006}. This was shown to also apply to some 
  multi-access channels (namely, those of a double state of
  independent components) \cite[Theorem~5]{Jafar2006}. Such results
  are usually proved by computing the two capacities and showing that
  they coincide.

  But what about more intricate networks, such as the
Interference Channel or the general Broadcast Channel, where the two capacity
regions are not known? Are the two regions still equal? This question
is answered here in the affirmative using a proof technique that
bypasses the need for explicit expressions for the two capacities.  
%In
%fact, we show that \emph{Shannon strategies}---where the encoders
%ignore the past states $S_{1}^{i-1}$ and the future states
%$S_{i+1}^{n}$ when producing the time-$i$ symbol---can 
%achieve the capacity region, irrespective of whether the state
%information is provided to the transmitters causally or noncausally.
This result holds
% conclusion holds true
for all state-dependent \emph{bipartite} networks (e.g. the Multiple
Access Channel, the Broadcast Channel, or the Interference Channel,
but not the Relay Channel or the Two-Way Channel),
in which all nodes are either transmitters or receivers (but never
both).  In fact---provided that the state sequence is unaffected by
the channel inputs---the
% The technique works not only when
state sequence need not be IID; it suffices that it be ergodic, or at least that it satisfy the
% , but whenever it satisfies the
weak law of large numbers. % (and is not affected by the channel inputs).
We do, however, assume that, conditional on the state sequence, the
network is memoryless.

\section{Problem Setup}
We consider a state-dependent network of multiple nodes that are
divided into two disjoint sets of transmitters and receivers as in
Fig.~\ref{fig:channel}. We thus exclude the
Relay Channel (because the relay is neither a transmitter nor a
receiver) and the Two-Way Channel (where the nodes both
transmit and receive).
We denote the number of transmitters $k$ and the number of receivers
$\ell$. The channel is characterized by its transition law
$W(y_1,\ldots, y_\ell\,|\,x_1,\ldots,x_k,s)$, which is the
conditional probability mass function (PMF) of the outputs observed by the
receivers given the symbols sent by the transmitters and the
state. The state, inputs, and outputs all take values in finite sets $\set{S}$,
$\set{X}_1,\ldots,\set{X}_k$, and $\set{Y}_1,\ldots,\set{Y}_\ell$.

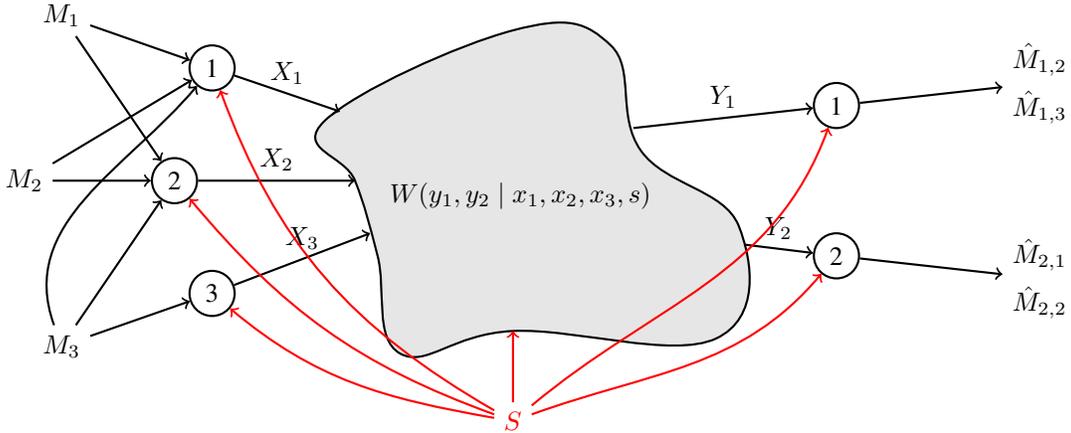
\begin{figure}[tbp]
\centering    
\begin{tikzpicture}[->, thick]

\path[draw, thick, fill=gray!20] plot[smooth cycle, tension=0.7] coordinates {
        (0, 2.0)      % Top center
        (1.3, 1.8)    % Top right bump
        (1.7, 0.4)    % Right top edge
        (3, -0.6)   % Right bottom edge
        (2.7, -2.1)   % Bottom right bump
        (0, -2)     % Bottom center indent
        (-1.5, -2.3)  % Bottom left bump
        (-1.8, -1)  % Left bottom edge
        (-2.1, 0)   % Left top edge
        (-2.5, 0.8)   % Top left bump
    };
    \node at (0.1,-0.2) {$W(y_1, y_2 \mid x_1,x_2,x_3,s)$};
    
    \node (state) at (0, -3.2) [red] {$S$};
    \draw[->,red] (state) -- (0, -2);
    
    \node[circle, draw, minimum size=6mm, inner sep=0pt] (E1) at (-4, 1.5) {1};
    \node[circle, draw, minimum size=6mm, inner sep=0pt] (E2) at (-4.5, 0) {2};
    \node[circle, draw, minimum size=6mm, inner sep=0pt] (E3) at (-4, -1.5) {3};

    \node[circle, draw, minimum size=6mm, inner sep=0pt] (D1) at (4.3, 1) {1};
    \node[circle, draw, minimum size=6mm, inner sep=0pt] (D2) at (4.3, -1) {2};
    
    \node (M1)  at (-6, 2.2)  {$M_1$};
    \node (M2)   at (-6.5, 0)    {$M_2$};
    \node (M3) at (-6, -2.2) {$M_3$};

     \node[align=left] (out1) at (7, 1.3) {$\hat{M}_{1,2}$\\[2mm]$\hat{M}_{1,3}$};
    \node[align=left] (out2) at (7, -1.3) {$\hat{M}_{2,1}$\\[2mm]$\hat{M}_{2,2}$};

    \draw[->] (M1) -- (E1);
    \draw[->] (M1) -- (E2);
    
    \draw[->] (M2) -- (E1);
    \draw[->] (M2)  -- (E2);
    
    \draw[->]  (M3) to [out=110,in=230] (E1);
    \draw[->] (M3) -- (E2);
    \draw[->] (M3) -- (E3);

    \draw[->] (E1) -- (-2.3, 0.92) node[midway, above] {$X_1$};
    \draw[->] (E2) -- (-2.1, 0) node[midway, above] {$X_2$};
    \draw[->] (E3) -- (-1.9, -0.7) node[midway, above] {$X_3$};

    \draw[->] (1.6, 0.7) -- (D1) node[midway, above] {$Y_1$};
    \draw[->] (3.07, -0.85) -- (D2) node[midway, above] {$Y_2$};

    \draw[->] (D1) -- (out1);
    \draw[->] (D2) -- (out2);

    \draw[->,red]  (state) to [out=150,in=-70] (E1);
    \draw[->,red]  (state) to [out=160,in=-50] (E2);
    \draw[->,red]  (state) to [out=170,in=-40] (E3);

    \draw[->,red]  (state) to [out=40,in=-110] (D1);
    \draw[->,red]  (state) to [out=20,in=-130] (D2);
\end{tikzpicture}
    
\caption{A Multiterminal Network with three transmitters, two receivers, and three messages.}
\label{fig:channel}
\end{figure}

The state sequence is assumed to satisfy the weak law of large
numbers: there exists a PMF $P_S$ of support~$\set{S}$
such that, for every $s\in\set{S}$ and every $\epsilon>0$,
\begin{equation}\label{eq:WLLN}
\lim_{n\to\infty} \Pr\left\{ \bigg|\frac{1}{n}\sum_{i=1}^n \mathds{1}\{S_i=s\} - P_S(s)\bigg| > \epsilon \right\} = 0.
%        P_S(s) \quad\textnormal{almost surely.}
\end{equation}
This is the case if the state process is ergodic, e.g., if it is an irreducible finite-state Markov process \cite{durrett2019probability}. We assume that the state sequence is autonomous, i.e., that it is not influenced by the channel inputs.
Conditional on the state sequence, the network
is memoryless:
%
%The joint probability distribution of the outputs and states, given
%the inputs and the initial state, factors as follows:
\begin{IEEEeqnarray}{rCl}
     \IEEEeqnarraymulticol{3}{l}{ 
     \Pr \Big\{ Y_{1,1}^n = y_{1,1}^n,\ldots,\, Y_{\ell,1}^n =
     y_{\ell,1}^n \,\Big|\, X_{1,1}^n =
     x_{1,1}^n,\ldots,\, X_{k,1}^n = x_{k,1}^n, \,S_{n} = s^{n}
     \Big\}
     }\nonumber\\* \qquad\qquad\qquad\qquad\qquad\qquad\qquad\qquad\qquad
     &= &\prod_{i=1}^n
     W(y_{1,i},\ldots, y_{\ell,i}\,|\,x_{1,i},\ldots, x_{k,i},s_i). %\nonumber\\*\ 
\end{IEEEeqnarray}

Consider a set of $\Sigma$ messages $\{M_\sigma\}_{\sigma\in[1:\Sigma]}$, where each message is presented to at least one transmitter and is intended for at least one receiver. Message $M_\sigma$ is of rate $R_\sigma$ and takes values in the set $\set{M}_\sigma \triangleq [1:2^{nR_\sigma}]$. The messages presented to Encoder~$a$ are
$\{M_\sigma\}_{\sigma \in \set{I}_a}$, and those intended for Decoder~$b$ are $\{M_\sigma\}_{\sigma \in \set{J}_b}$, where $\set{I}_a,\set{J}_b \subseteq [1:\Sigma]$. We refer to $\bigl(R_\sigma\bigr)_{\sigma\in[1:\Sigma]}$ as the ``rate vector.''  %Here, some messages may be presented to only one encoder, while others
Given $b\in[1:\ell]$ and $\sigma\in\set{J}_b$, we use $\hat{M}_{b,\sigma}$ to denote Decoder~$b$'s guess of   $M_{\sigma}$. Since Decoder~$b$ is cognizant of the states, its decoding function has the form
\begin{IEEEeqnarray}{c}\label{eq:dec}
    \phi_b\colon \set{Y}_b^{\times n} \times \set{S}^{\times n} \to \bigotimes_{\sigma\in\set{J}_b}\set{M}_\sigma,\quad \left(y_b^n,s^n\right)\mapsto \big\{\hat{M}_{b,\sigma}\big\}_{\sigma\in\set{J}_b}.% = \phi_b\Bigl(\Bigr)
\end{IEEEeqnarray}

As for the encoders, we consider both causal and noncausal state
information. In the causal case, the Time-$i$ symbol produced by
Encoder~$a$ is determined by the messages presented to it, namely,
$\{M_{\sigma}\}_{\sigma \in \set{I}_a}$, and the state sequence up to Time~$i$, namely,
$s^i$. Encoder~$a$ is thus specified by $n$ functions
$\{f_{a,i}^\textnormal{c}\}_{i=1}^n$, where
\begin{IEEEeqnarray}{c}\label{eq:causal_enc}
    f_{a,i}^{\textnormal{c}}\colon \left( \bigotimes_{\sigma\in \set{I}_a} \set{M}_\sigma \right) \times \set{S}^{\times i} \to \set{X}_a,\quad
    \left(\left\{M_\sigma\right\}_{\sigma\in\set{I}_a},s^i\right) \mapsto x_{a,i}.
\end{IEEEeqnarray}
%where the encoder takes as input all the messages $M_J$ known to it ($a\in J$).

In the noncausal case, every encoder is cognizant of the entire state sequence prior to transmission, hence the encoding function employed by Encoder~$a$
has the form
\begin{IEEEeqnarray}{c}\label{eq:noncau_enc}
    f_{a}^{\textnormal{nc}} \colon \left( \bigotimes_{\sigma\in \set{I}_a} \set{M}_\sigma \right) \times \set{S}^{\times n} \to \set{X}_a^{\times n},\quad
   \left(\left\{M_\sigma\right\}_{\sigma\in\set{I}_a},s^n\right) \mapsto x_a^n.
\end{IEEEeqnarray}
We sometimes also write it as $n$ functions $\{f_{a,i}^\textnormal{nc}\}_{i=1}^n$,
\begin{equation}\label{eq:nc_enc_separate}
f_{a,i}^\textnormal{nc} \colon \left( \bigotimes_{\sigma\in \set{I}_a} \set{M}_\sigma \right) \times \set{S}^{\times n} \to \set{X}_a,\quad
    \left(\left\{M_\sigma\right\}_{\sigma\in\set{I}_a},s^n\right) \mapsto x_{a,i}.
\end{equation}

%We shall also consider scenarios where the encoders employ Shannon
%strategies, which, when producing the time-$i$ channel input, ignore
%the past states $S_{1}^{i-1}$ and the future states $S_{i+1}^{n}$ and
%only take into account the message/s and the time-$i$ state. When
%restricted to using Shannon strategies, Encoder~$a$ is characterized
%by $n$ mappings of the form
%\begin{IEEEeqnarray}{c}\label{eq:Shannon}
%    f_{a,i}^{\textnormal{Sh}}\colon \left( \bigotimes_{\sigma\in \set{I}_a} \set{M}_\sigma \right) \times \set{S} \to \set{X}_a,\quad
%    \left(\left(M_\sigma\right)_{\sigma\in\set{I}_a},s_i\right) \mapsto x_{a,i}.
%\end{IEEEeqnarray}

In both the causal and the noncausal cases, 
%the event that Decoder~$b$ errs in decoding $M_{\sigma}$ is denoted
%\begin{IEEEeqnarray}{c}
%    \set{E}_{b,\sigma} \triangleq \bigl\{\hat{M}_{b,\sigma}\neq M_{\sigma}\bigr\},
%\end{IEEEeqnarray}
%which means that the reconstruction of the message $M_J$ by
%Decoder~$b$ is wrong. Thus,
the error event $\set{E}$ is defined as
\begin{equation}
\mathcal{E} = \left\{ \exists b,\sigma\colon \hat{M}_{b,\sigma}\neq M_{\sigma} \right\}.
\end{equation}
The average probability of error $\Pr\{ \mathcal{E} \}$ is
%\begin{IEEEeqnarray}{c}
%    P_{\text{e}}^{(n)}\triangleq \Pr \left\{\right\}
%\end{IEEEeqnarray}
computed with all messages being uniformly distributed and mutually independent. (We sometimes write $\mathrm{Pr}_{\textnormal{c}}\{\set{E}\}$ for the causal setting and $\mathrm{Pr}_{\textnormal{nc}}\{\set{E}\}$ for noncausal.)

A rate vector $\left(R_\sigma\right)_{\sigma\in[1:\Sigma]}$ is said to be achievable if there exists a sequence of encoding and decoding functions, indexed by the blocklength~$n$, such that $\Pr\{ \mathcal{E} \}$ tends to 0 as $n$ tends to infinity. The capacity regions $\set{C}_\textnormal{c}$ and $\set{C}_\textnormal{nc}$ in the causal and noncausal settings are defined to be the closure of the sets of achievable rate vectors in their respective settings.

\section{Main Result and Proof}
\begin{theorem}\label{thm}
The capacity region of a bipartite network, where all receivers are state-cognizant, is the same in the causal and noncausal settings:
    \begin{IEEEeqnarray}{c}
    \label{eq:main}
        \set{C}_\textnormal{c} = \set{C}_\textnormal{nc}.
    \end{IEEEeqnarray}
\end{theorem}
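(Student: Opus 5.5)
The inclusion $\set{C}_\textnormal{c}\subseteq\set{C}_\textnormal{nc}$ is immediate: a causal encoder $\{f_{a,i}^\textnormal{c}\}$ is a special case of a noncausal one that ignores $s_{i+1}^n$. The work is in the reverse inclusion. My plan is to take a good noncausal code and turn it into a causal code on a slightly longer block. The causal code \emph{simulates} the noncausal code for one fixed, deterministic state sequence chosen in advance. It does so by waiting for the actual states to supply the needed state values, which the state-cognizant receivers can track.

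\textbf{Step 1: freeze a good typical state sequence.} Fix an achievable rate vector in $\set{C}_\textnormal{nc}$ and a noncausal code of blocklength $n$ with error probability $\epsilon_n\to0$. Let $\set{T}_n$ be the set of $s^n$ whose empirical frequencies are within $\epsilon$ of $P_S$. By \eqref{eq:WLLN}, $\Pr\{S^n\in\set{T}_n\}\to1$. Since $\epsilon_n$ is an average over $S^n$ (and over uniform messages), some $t^n\in\set{T}_n$ satisfies
\[
\Pr_\textnormal{nc}\{\set{E}\mid S^n=t^n\}\le \epsilon_n/\Pr\{S^n\in\set{T}_n\}.
\]
With $t^n$ fixed, the codewords $x_a^n(\cdot)=f_a^\textnormal{nc}(\cdot,t^n)$ are known to Encoder~$a$ before transmission begins.

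\textbf{Step 2: causal simulation over a block of length $N=\lceil n(1+\delta)\rceil$.} For each $s\in\set{S}$, let Encoder~$a$ keep an ordered queue of the positions $j\in[1:n]$ with $t_j=s$. At time $i$ it observes $S_i$. If the queue for $S_i$ is nonempty, it pops the first position $j$ and sends $x_{a,j}$; otherwise it sends a fixed dummy symbol. All encoders see the same states, so they pop the same $j$ at the same times. This is causal, because it uses only $S^i$ and the messages.

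Each decoder knows $S^N$, so it can recompute the same matching $j\mapsto i(j)$. It collects $\bigl(y_{b,i(j)}\bigr)_{j=1}^n$ and applies $\phi_b(\cdot,t^n)$. Because the state is autonomous and the network is memoryless given the states, conditional on $S^N$ the collected outputs are distributed as $\prod_j W(\cdot\mid x_{1,j},\ldots,x_{k,j},t_j)$, exactly as in the noncausal code with $S^n=t^n$. The bipartite structure is essential here: no node's input depends on channel outputs, so the reordering cannot disturb the joint law of inputs and outputs.

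\textbf{Step 3: the matching completes with high probability.} This is the main obstacle, though a mild one. We need, for every $s$, that the number of times $i\le N$ with $S_i=s$ is at least $|\{j:t_j=s\}|\le n(P_S(s)+\epsilon)$. By \eqref{eq:WLLN} applied at blocklength $N$, this count exceeds $N(P_S(s)-\epsilon)$ with probability tending to one. Since $P_S$ has full support, choosing $\epsilon$ small relative to $\delta\min_s P_S(s)$ makes $N(P_S(s)-\epsilon)\ge n(P_S(s)+\epsilon)$ for every $s$.

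The causal error probability is therefore at most $\Pr\{\text{matching fails}\}+\Pr_\textnormal{nc}\{\set{E}\mid S^n=t^n\}\to0$. The rates are reduced by the factor $n/N\to 1/(1+\delta)$. Letting $\delta\downarrow0$ and using the closedness of $\set{C}_\textnormal{c}$ gives $\set{C}_\textnormal{nc}\subseteq\set{C}_\textnormal{c}$, and hence \eqref{eq:main}.
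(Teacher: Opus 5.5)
Your proposal is correct and is essentially the paper's proof. You freeze a typical state sequence with small conditional noncausal error, and each encoder causally sends the symbol at the next unused position of that frozen sequence whose state matches the current state. The state-cognizant decoders undo this common permutation, and you bound the causal error by the conditional noncausal error plus the probability that some state occurs too rarely in the longer block. The differences are cosmetic: a vanishing-error code sequence instead of a fixed $p$, an explicit choice of the typicality slack relative to $\delta\min_s P_S(s)$, and a blocklength inflation of $1+\delta$ rather than $1+2\delta$.
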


\medskip

\begin{IEEEproof}
  It suffices to show that
  $\set{C}_\textnormal{nc}\subseteq \set{C}_\textnormal{c}$, as
  the reverse inclusion holds because every causal encoding strategy
  can also be employed with noncausal state information. Let the rates
  $\left(R_\sigma\right)_{\sigma\in[1:\Sigma]}$ be in the interior of
  $\set{C}_\textnormal{nc}$ and hence, given any $p>0$, for
  sufficiently large~$n$, there exist \emph{noncausal} encoding
  functions~\eqref{eq:noncau_enc} and decoding
  functions~\eqref{eq:dec} of these rates with
  $\mathrm{Pr}_{\textnormal{nc}}\{\mathcal{E}\} \le p$. We will show
  that, given any $\delta>0$, the rate vector
  $\left(R_\sigma/(1+2\delta) \right)_{\sigma\in[1:\Sigma]}$ is
  achievable with causal state information, i.e., that, for this rate
  vector, for sufficiently large~$n$, there exist \emph{causal}
  encoding functions of the form \eqref{eq:causal_enc}, together with
  corresponding decoding functions
%  for the rate vector
%  $\left(R_\sigma/(1+2\delta) \right)_{\sigma\in[1:\Sigma]}$
  satisfying $\mathrm{Pr}_{\textnormal{c}}\{\mathcal{E}\}\le 3p$. The
  claim \eqref{eq:main} will then follow.

By \eqref{eq:WLLN}, for sufficiently large $n$, 
        \begin{IEEEeqnarray}{c} \label{eq:PrS12}
            \Pr\left\{S^n\in \set{T}^{(n)}_{\delta}(P_S)\right\} > \frac{1}{2},
        \end{IEEEeqnarray}
        where $\set{T}^{(n)}_{\delta}(P_S)$ denotes the
        $\delta$-strongly typical set with respect to $P_S$
        \cite{csiszartextbook}.
%    \begin{IEEEproof}[Formal Proof for Theorem \ref{thm}]
        Expressing $\mathrm{Pr}_{\textnormal{nc}}(\mathcal{E})$ as
        \begin{IEEEeqnarray}{c}\label{eq:PEtotal}
        \mathrm{Pr}_{\textnormal{nc}}\left\{\mathcal{E}\right\} = \sum_{s^n\in \set{S}^{\times n}}  \mathrm{Pr}_{\textnormal{nc}}\left\{\mathcal{E}\,\middle|\, S^n=s^n \right\}\cdot \Pr\left\{S^n=s^n\right\} 
    \end{IEEEeqnarray}
    demonstrates that the inequalities
    $\mathrm{Pr}_{\textnormal{nc}}\{\mathcal{E}\} \le p$ and
    \eqref{eq:PrS12} imply the existence of some length-$n$ state
    sequence $\tilde{s}^n$ for which the following two conditions hold:
%        By assumption, there exists a noncausal coding strategy of desired rates for which
%    \begin{IEEEeqnarray}{c}\label{eq:PEtotal}
%        \mathrm{Pr}_{\textnormal{nc}}\left\{\mathcal{E}\right\} = \sum_{s^n\in \set{S}^{\times n}}  \mathrm{Pr}_{\textn%ormal{nc}}\left\{\mathcal{E}\,\middle|\, S^n=s^n \right\}\cdot \Pr\left\{S^n=s^n\right\} < p.
%    \end{IEEEeqnarray}
%    This and \eqref{eq:PrS12} together guarantee that there exists
    \begin{subequations}
      \label{sub:Choicetildes}
    \begin{equation}
    \tilde{s}^n\in \set{T}^{(n)}_{\delta}(P_S)
    \end{equation}
    and, for the given noncausal coding scheme,
    \begin{equation}\label{eq:ChoiceOftilde}
     \mathrm{Pr}_{\textnormal{nc}}\left\{\mathcal{E}\,\middle|\, S^n=\tilde{s}^n \right\}<2p.
    \end{equation}
    \end{subequations}
    Our causal coding scheme depends highly on $\tilde{s}^{n}$, so
    it is crucial that the encoders and decoders agree on it ahead of
    time. Henceforth, it will be fixed.
    % Fix such an $\tilde{s}^n$ for the rest of this proof, and let

    Let $\tilde{x}_{a,i}$ denote the time-$i$ symbol Encoder~$a$
    produces in the noncausal case when it wishes to convey the
    given messages after the state sequence $\tilde{s}^{n}$ has been
    revealed to it noncausally. (It is thus the result of applying the
    mapping $f_{a,i}^\textnormal{nc}$ of~\eqref{eq:nc_enc_separate} to
    the given messages and the sequence $\tilde{s}^{n}$.)

    Define the longer blocklength
    \begin{IEEEeqnarray}{rCl}
      \bar{n} & = & (1+2\delta)n.
    \end{IEEEeqnarray}
    We next describe the blocklength-$\bar{n}$ causal coding scheme
    that we propose in order to convey the messages
    $\{M_\sigma\}_{\sigma\in[1:\Sigma]}$ (that the noncausal scheme
    conveys in $n$ channel uses) when the transmitters are provided
    the state sequence $s^{\bar{n}}$ causally.
    %Defining we proceed to
    %show how to encode said messages when an arbitrary
    %length-$\bar{n}$ state sequence $s^{\bar{n}}$ is presented to the
    %encoders causally.
    The time-$i$ symbol produced by Encoder~$a$ of our proposed causal
    scheme will be denoted $x_{a,i}$ (making the messages to be
    conveyed and the prevailing state sequence $s^i$
    implicit).

    Roughly speaking, our construction will guarantee that, subject to
    some technicalities (see \eqref{eq:NoFail} ahead), $n$ of the
    $\bar{n}$ pairs
    $(x_{a,1},s_{1}), \ldots, (x_{a,\bar{n}},s_{\bar{n}})$ will be a
    permutation (determined by $s^{\bar{n}}$ and $\tilde{s}^{n}$ and
    hence common to all encoders and decoders) of the $n$-tuple
    $(\tilde{x}_{a,1},\tilde{s}_{1}), \ldots,
    (\tilde{x}_{a,n},\tilde{s}_{n})$ so that the performance of the
    causal scheme will be essentially as good as that of the noncausal
    scheme (because the network law is memoryless conditional on the
    state sequence and hence permutation invariant.)

     \newcommand{\gi}{\kappa}
     The causal encoders and the decoders % $a$ observes
     observe the first state $s_{1}$ and look for
     the first time-index at which $\tilde{s}_{i}$ is equal to
     it. If none is found, they set $\gi(1)$ to zero and Encoder~$a$ sets $x_{a,1}$
     to some arbitrary symbol. Otherwise, they set $\gi(1)$ to be
     that time-index (so
     $\gi(1) = \min \{i \geq 1\colon \tilde{s}_{i} = s_{1}\}$), and Encoder~$a$
     % Encoder~$a$
     produces the symbol $\tilde{x}_{a,\gi(1)}$. Thereafter, the
     encoders and decoders mark the time-index $\gi(1)$ as ``used.''

     At the second time instance, they observe $s_{2}$ and
     search for the first unused time index at which $\tilde{s}_{i}$
     is equal to it. Again, if none is found, they set $\gi(2)$ to zero
     and Encoder~$a$ sets $x_{a,2}$ to some arbitrary
     symbol. Otherwise, they set
     $\gi(2)$ to equal that time-index (so
     $\gi(2) = \min \{i \neq \gi(1)\colon \tilde{s}_{i} = s_{2}\}$)
     and Encoder~$a$ produces the symbol $\tilde{x}_{a,\gi(2)}$. Thereafter,
     all encoders and decoders also mark the time-index $\gi(2)$ as
     ``used.''  We continue in this fashion $\bar{n}$ times. By then,
     all the time indices $1, \ldots, n$ will have been marked ``used''
     % transmission at time $n$
     provided that
     \begin{IEEEeqnarray}{rCl}
       N(s\,|\,s^{\bar{n}}) & \geq & N(s\,|\,\tilde{s}^n), \quad \forall s
       \in \mathcal{S} \label{eq:NoFail}
     \end{IEEEeqnarray}
     which holds with probability tending to $1$ as $n \to \infty$
     by~\eqref{eq:WLLN}.  (If~\eqref{eq:NoFail} does not hold, our
     causal scheme fails and produces an error.)
%     The transmissions
%     from time $n+1$ through time $\bar{n}$ are arbitrary and the
%     corresponding outputs are ignored by the decoders.

     The receivers---knowing $s^{\bar{n}}$ (by our assumption that
     they are cognizant of the state) and knowing $\tilde{s}^{n}$
     (which was fixed ahead of time)---can recover the mapping
     $\gi(\cdot)$. Since no two elements of $[1:\bar{n}]$ are mapped
     to the same \emph{nonzero} element of $[1:n]$, we can define the
     reverse mapping $\gi^{-1}\colon [1:n] \to [1:\bar{n}]$

 %    And since the latter is a one-to-one mapping from
 %    $[1:n]$ to $\{\gi(i)\}_{i=1}^{n}$, they can invert it, i.e.,
 %    compute $i \in [1:n]$ from $\gi(i)$, an operation we denote
 %    $\gi^{-1}(i)$.
     Each decoder now rearranges its received sequence at times $1$
     through $n$, with Decoder~$b$ rearranging the received sequence
     $y_{b}^{\bar{n}}$ to obtain the sequence
     $y_{b,\gi^{-1}(1)}, \ldots, y_{b,\gi^{-1}(n)}$ and feeds this
     latter sequence to its noncausal decoder counterpart to produce
     the guess
     \begin{equation*}
       \phi_{b}\bigl( y_{b,\gi^{-1}(1)}, \ldots, y_{b,\gi^{-1}(n)}, \tilde{s}_{1},
       \ldots, \tilde{s}_{n} \bigr).
     \end{equation*}
     % which can also be expressed as
     % \begin{equation*}
     %   \phi_{b}\bigl( y_{b,\gi^{-1}(1)}, \ldots, y_{b,\gi^{-1}(n)}, s_{\gi(1)},
     %   \ldots, s_{\gi(n)} \bigr).
     % \end{equation*}
     For large enough $n$, the probability of \eqref{eq:NoFail}
     exceeds $1-p$, which, together with \eqref{eq:ChoiceOftilde}
     implies that $\mathrm{Pr}_{\textnormal{c}}\{\mathcal{E}\} < 3p$.

     A more formal account follows.  Let the sequence $\tilde{s}^{n}$ satisfy
     \eqref{sub:Choicetildes}, and let $\tilde{P}_S$ denote its
     type, so
     \begin{equation}
     \tilde{P}_S(s)=N(s\,|\,\tilde{s}^n)/n,
     \end{equation}
     where $N(s|\tilde{s}^n)$ denotes the number of occurrences of $s$
     in $\tilde{s}^n$.

     We reorder the channel uses to group the same state realizations in $\tilde{s}^n$ together. That is, for every $i\in[1:n]$, record the state at Time~$i$ and the number of times this state has occurred up to Time $i$:
     \begin{equation}
     \tilde{g}(i) = \bigl(\tilde{s}_i,N(\tilde{s}_i\,|\,\tilde{s}^i)\bigr).
     \end{equation}
     The mapping $\tilde{g}$ is invertible (because, as we recall, $\tilde{s}^n$ is fixed). Therefore,
  the encoding function employed by Encoder~$a$, conditional on $S^n=\tilde{s}^n$, can be equivalently expressed using the following mappings:
  \begin{equation}
  \left\{\tilde{f}_{a,s,j} \right\}_{s\in\set{S},\, j\in[1:n\tilde{P}_S(s)]},
  \end{equation}
  where
\begin{equation}\label{eq:fas}
\tilde{f}_{a,s,j} \bigl( \{M_\sigma\}_{\sigma\in\set{I}_a} \bigr)= f_{a,\,\tilde{g}^{-1}(s,j)}^{\textnormal{nc}} \bigl( \{M_\sigma\}_{\sigma\in\set{I}_a} , \tilde{s}^n \bigr),
\end{equation}
with the right-hand side (RHS) defined in \eqref{eq:nc_enc_separate}. The decoding function of Decoder~$b$ can also be expressed in terms of the new indices:
\begin{equation}
\tilde{\phi}_b \left( \{ y_{b,s,j}\}_{s\in\set{S},\, j\in[1:n\tilde{P}_S(s)]} \right) = \phi_b \left( \{y_{b,\, \tilde{g}^{-1}(s,j)}\}, \tilde{s}^n \right).
\end{equation}

We next describe a causal encoding strategy for $(1+2\delta)n$ channel uses. Encoder~$a$ is given by the following mappings:
\begin{equation}
f_{a,i}^{\textnormal{c}}\left( \{M_\sigma\}_{\sigma\in\set{I}_\sigma}, s^i \right) = \tilde{f}_{a, s_i, N(s_i\,|\,s^i)} \left( \{M_\sigma\}_{\sigma\in\set{I}_\sigma} \right),
\end{equation}
if $N(s_i\,|\,s^i) \le n\tilde{P}_S(s_i)$. Otherwise, pick an input at random.

We now turn to the decoders. Denote
\begin{equation}
\set{A} \triangleq \left\{ \textnormal{every $s\in\set{S}$ appears at least $n\tilde{P}_S(s)$ times in $S^{(1+2\delta)n}$}\right\}.
\end{equation}
By \eqref{eq:WLLN} and since $\tilde{P}_S$ is close to $P_S$, for sufficiently large $n$, the probability for $\set{A}$ to happen is at least $1-p$.
If $\set{A}$ is false, then all decoders declare an error. If $\set{A}$ is true, then all decoders reorder the channel uses to group the same state realizations together as above. This option is characterized by the mapping $g$:
\begin{equation}
g(i, s^n) = \bigl(s,N(s_i\,|\,s^i)\bigr).
\end{equation}
For every $s\in\set{S}$, the decoders keep the first $n \tilde{P}_S(s)$ channel uses where the state equals $s$ and discard the rest. On the channel outputs that they keep, Decoder $b$ applies $\tilde{\phi}_b$ to recover its desired messages. That is,
\begin{equation}
\phi_b \left( y_b^n,s^n \right) = \tilde{\phi}_b \bigl( \{y_{b,\,g(i,s^n)} \}_{i\in[1:n]}\bigr).
\end{equation}

Provided that $\set{A}$ is true, the causal scheme described above has exactly the same error probability as the given noncausal scheme, when the latter is conditional on $S^n=\tilde{s}^n$, i.e.,
\begin{equation}
\mathrm{Pr}_{\textnormal{c}} \{\set{E}\,|\,\set{A}\} = \mathrm{Pr}_{\textnormal{nc}}\left\{\mathcal{E}\,\middle|\, S^n=\tilde{s}^n \right\} .
\end{equation} 

Indeed, for any $\{M_\sigma\}_{\sigma\in[1:\Sigma]}$, after reordering, the channel inputs in the causal case are exactly the same as those in the noncausal case specified to $S^n=\tilde{s}^n$, following the construction in \eqref{eq:fas}. Since the channel is memoryless given the states, the joint distribution of the output symbols (again after reordering) is also the same between the two cases. We can thus bound the error probability of the causal coding scheme as follows:
    \begin{IEEEeqnarray}{rCl}
%    \IEEEyesnumber\IEEEyessubnumber*
%      \IEEEeqnarraymulticol{3}{l}{
       \mathrm{Pr}_{\textnormal{c}}\{\set{E}\}
%      }\nonumber\\* \quad
       & = & \mathrm{Pr}_{\textnormal{c}} \{\set{E}\,|\,\set{A}\} \cdot \Pr \{ \set{A} \} +  \mathrm{Pr}_{\textnormal{c}} \{\set{E}\,|\,\set{A}^\textnormal{c}\} \cdot \Pr \{ \set{A}^\textnormal{c} \} \\
       & \le &  \mathrm{Pr}_{\textnormal{c}} \{\set{E}\,|\,\set{A}\} + \Pr \{ \set{A}^\textnormal{c} \}\\
       & = &  \mathrm{Pr}_{\textnormal{nc}}\left\{\mathcal{E}\,\middle|\, S^n=\tilde{s}^n \right\} + \Pr \{ \set{A}^\textnormal{c} \} \\
       & \le & 2p  + p =3p,
    \end{IEEEeqnarray}
%    where the inequality is true when $n>\max (N_0,N_1)$.
establishing the desired bound.
\end{IEEEproof}

\section{Acknowledgments}
This work was supported by the Swiss National
Science Foundation (SNSF) under Grant 200021-215090.

\end{document}